\title{Bounding normalization time through intersection types}
\author{Erika De Benedetti\footnote{The first author is supported by the MIUR Project IPODS.} \qquad \qquad Simona Ronchi Della Rocca
\institute{Dipartimento di Informatica\\
Universit\`a di Torino, Italy
\email{debenede@di.unito.it \qquad ronchi@di.unito.it}
}
}
\newtheorem{definition}{Definition}
\newtheorem{theorem}{Theorem}
\newtheorem{property}{Property}
\newtheorem{example}{Example}
\newtheorem{remark}{Remark}
\newtheorem{lemma}{Lemma}
\newcommand{\lin}{\rightarrow}
\newcommand{\dem}{\triangleright}
\newcommand{\der}{\vdash}
\newcommand{\SLL}{{\tt SLL}}
\newcommand{\STA}{{\tt STA}}
\newcommand{\STI}{{\tt STI}}
\newcommand{\FV}[1]{{\rm FV}(#1)}
\newcommand{\dom}[1]{\mbox{dom}(#1)}
\newcommand{\RK}[1]{{\tt rk}(#1)}
\newcommand{\D}[1]{{\tt d}(#1)}
\newcommand{\W}[2]{{\tt W}(#1, #2)}
\newcommand{\lam}{\lambda}
\newcommand{\M}{{\tt M}}
\newcommand{\N}{{\tt N}}
\newcommand{\Q}{{\tt Q}}
\newcommand{\x}{{\tt x}}
\newcommand{\y}{{\tt y}}
\newcommand{\z}{{\tt z}}
\newcommand{\redbeta}{\underset{\beta}{\longrightarrow}}
\newcommand{\redbetas}{\overset{*}{\underset{\beta}{\longrightarrow}}}
\newcommand{\sub}[3]{#1 [#2/#3]}
\newcommand{\A}{{\tt A}}
\newcommand{\B}{{\tt B}}
\newcommand{\C}{{\tt C}}
\newcommand{\tvar}{{\tt a}}
\newcommand{\stra}[2]{{#1}_{1} \wedge ... \wedge {#1}_{#2}}
\begin{document}
\maketitle

\section{Introduction}
Intersection types were originally introduced as idempotent, i.e., modulo the equivalence $\sigma \wedge \sigma =\sigma$.
In fact, they have been used essentially for semantic purposes, for building filter models for $\lambda$-calculus,
where the interpretation of types as properties of terms induces naturally the idempotence property.\\ 
Recently it has been observed that, when dropping idempotency, intersection types can be used for reasoning about the complexity of $\beta$-reduction.
Some results have been already obtained along this line. Terui \cite{TeruiInt} designed a system assigning non-idempotent intersection types to $\lambda$-calculus, which can type all and only the strongly normalizing terms, and such that the size of any derivation with subject $\M$ is bigger than the size of every term in the $\beta$-reduction sequence from $\M$ to its normal form. 
This property can be used for computing a bound of every normalizing $\beta$-reduction sequence starting from $\M$. A more precise result in this direction has been obtained by Lengrand \cite{bernadetleng11}, who gave a precise measure of the number of $\beta$-reduction steps. Namely he designed a type assignment system, where intersection is considered without idempotency, and defined the notions of measure of derivation and of principal derivation for a given term. Then he proved that the measure of a principal derivation of a type for a normalizing term $\M$ corresponds to the maximal length of a normalizing $\beta$-reduction sequence for $\M$. \\
In this line, we go one step forward, and use intersection types without neither idempotence nor associativity to express the functional dependence of the length of a normalizing $\beta$-reduction sequence from a term $\M$ on the size of $\M$ itself . 
In order to obtain such a result, we take inspiration from the system \STA\ of Gaboardi and Ronchi Della Rocca \cite{GaboardiRonchi07csl}, in its turn inspired by the Soft Linear Logic of
Lafont \cite{Lafont04}, which characterizes the polynomial time computations. The resulting system allow us to give a bound on the number of steps necessary to reduce a normalizing term $\M$ to its normal form, in the form $|\M|^{d+1}$, where $|\M|$ is the size of the term, and $d$ is a measure depending on the type derivation for it (the {\em depth}). Since for every normalizing term there is a type derivation with minimal depth,  this bound does not depend on a particular derivation.
A preliminary type assignment of this kind has been described in \cite{DeBen11}.

\medskip
Some type assignment systems without idempotency have been already studied in the literature, for various purposes.
Kfoury and Wells, in \cite{kfouryWells04} used non-idempotent intersection
in order to formalize a type inference semi-algorithm, whose complexity has been studied in 
\cite{MairsonNeergaard04}. Kfoury, in \cite{kfoury00}, connected non idempotent intersection types with linear $\beta$-reduction.
Recently non idempotent intersection types have been used by Pagani and Ronchi Della Rocca for characterizing the
solvability in the resource $\lambda$-calculus \cite{pagani10fossacs, PaganiRonchi:2010FI}.
In \cite{DiGia08} the game semantics of a typed $\lambda$-calculus has been described in logical form using an intersection type 
assignment system where the intersection is not idempotent neither commutative nor associative. 
Some complexity results have been obtained by De Carvalho in \cite{deCarvalho09CORR}, using a $\lambda$-algebra induced by non idempotent types. Recently a logical description of relational model of $\lambda$-calculus \cite{bucciarelli07csl} has been designed, through a non-idempotent type assigment system \cite{paolini12draft}.

\section{System \STI}
We start by introducing \STI\ (Soft Type assignment with Intersection), a type assignment system for $\lam$-calculus assigning to $\lambda$-terms non-idempotent and not associative intersection types. The system assigns types to all and only strongly normalizing terms.

\begin{definition}
\begin{enumerate}[i)]
\item Terms of $\lam$-calculus are defined by the following grammar:
$$\M ::= \x \mid \M\M \mid \lam \x. \M$$
where $\x$ ranges over a countable set ot variables. The symbol $\equiv$ denotes the syntactical equality modulo renaming of bound variables.
\item The reduction relation $\redbeta$ is the contextual closure of the rule $(\lam \x. \M) \N \rightarrow \M[\N/\x]$,
where $\M[\N/\x]$ is the capture-free substitution of $\N$ to all the free occurrences of $\x$ in $\M$.  $\redbetas$ is the reflexive and transitive closure of $\redbeta$.

\item A term $\M$ is an instance of $\N$ if it is obtained from $\N$ by renaming a subset of its free variables with a unique fresh name.

\item The set of \STI\ types is defined as follows:
\begin{align*}
\A ::= & \tvar \mid \sigma \lin \A & \mbox{(linear types)}\\
\sigma ::= & \A \mid \underbrace{ \sigma \wedge ... \wedge \sigma}_n \quad (n > 1) & \mbox{(intersection types)}
\end{align*}
where $\tvar$ ranges over a countable set of type variables. Linear types are ranged over by $\A, \B, \C$, intersection types by 
$\sigma, \tau, \rho$.  The connective $\wedge$ is commutative, but it is not idempotent nor associative.

The number of elements of a type is defined inductively as $l(\A) = 1$, $l(\stra{\sigma}{n})= l(\sigma_{1}) + ... + l(\sigma_{n})$.

\item A context is a finite set of assumptions of the shape $\x: \sigma$, where $\x$ is a variable and $\sigma$ is a type. Variables in a context are all distinct, and
contexts are ranged over by $\Gamma, \Delta$. $dom(\Gamma)$ is the set $\{x \mid x:\sigma \in \Gamma \}$.
The intersection of contexts is given by
$$\Gamma \wedge \Delta = \{ \x : \sigma \mid \x : \sigma \in \Gamma, \x \not\in \dom{\Delta} \} \cup \{ \x : \tau \mid \x : \tau \in \Delta, \x \not\in \dom{\Gamma} \} \cup \{ \x : \sigma \wedge \tau \mid \x:\sigma \in \Gamma, \x:\tau \in \Delta \}$$
while $\Gamma, \Delta$ represents the union of sets $\Gamma$ and $\Delta$, provided that $\Gamma \# \Delta$, i.e. $\dom{\Gamma} \cap \dom{\Delta} = \emptyset$.

\item The system \STI\ proves sequents of the shape $\Gamma \der \M : \sigma$,
where $\Gamma$ is a context, $\M$ is a term of $\lambda$-calculus, and $\sigma$ is a type. The rules are given in Table $\ref{tab:rules}$.

\item Derivations are denoted by $\Pi, \Sigma$. $\Pi \dem \Gamma \der \M: \sigma$ denotes a derivation $\Pi$ with conclusion 
$\Gamma \der \M: \sigma$.

\end{enumerate}
\end{definition}

\begin{table}
\small
\begin{center}
\begin{tabular}{cc}
\hline
\\
$\infer[(Ax)]
     {\x: \A \der \x: \A}
     {}
\qquad
\infer[(w)]
     {\Gamma, \x: \A \der \M : \sigma}
     {\Gamma \der \M : \sigma & \x \notin \dom{\Gamma}}
$
\\
\\
$\infer[(\lin I)]
     {\Gamma \der \lambda \x. \M: \sigma \lin \A}
     {\Gamma, \x: \sigma  \der \M: \A}
\qquad
\infer[(\lin E)]
     {\Gamma, \Delta \der \M\N: \A}
     {\Gamma \der \M: \sigma \lin \A & \Delta \der \N: \sigma & \Gamma \# \Delta}$
     \\
     \\
     $\infer[(\wedge_n)]
     {\bigwedge_{i=1}^n \Gamma_i \der \M: \stra{\sigma}{n}}
     {\Gamma_{1} \der \M: \sigma_{1} & ... & \Gamma_{n} \der \M: \sigma_{n} & n > 1}$
     \\
     \\
     $\infer[(m)]
     {\Gamma, \x:  \stra{\sigma}{n} \der \M [\x/\x_{1}, ...\ , \x/\x_{n}]: \tau}
     {\Gamma, \x_{1} : \sigma_{1}, ..., \x_{n}: \sigma_{n} \der \M: \tau}$
\\
\\
\hline

\end{tabular}
\end{center}
\caption{The type assignment system}
\label{tab:rules}
\normalsize
\end{table} 

Some comments are in order. Since the condition on contexts in rule $(\lin E)$, terms are built in a linear form, and an explicit multiplexor rule is present (rule $(m)$). This allows to control the number of (multiple) contractions, which is responsible for the growth of the reduction time. The counterpart of the contraction on the right side of a derivation is the rule $(\wedge_n)$, which is parametric in $n$. In doing this, we were inspired by the Soft Linear Logic of Lafont.

Let us define {\em constructive} the rules, which contribute in building the subject, i.e., either $(Ax)$, or $(\lin I)$ or $(\lin E)$). 

\begin{definition}[Intersection trees]
Let $(\delta)$ be a (possibly empty) sequence of applications of rules $(w)$ and $(m)$. An {\em intersection tree} is a maximal (sub)proof of the shape defined inductively in the following way:
\begin{itemize}
\item Let the last rule of $\Sigma$ be a constructive rule . Then
\small
$$
\infer=[(\delta)]{\Gamma \der M:\sigma }{\Sigma}
$$
\normalsize
is an empty intersection tree, with conclusion $\Gamma \der M:\sigma$ and one leaf $\Sigma$.
\item 
If $\Sigma_i$ is a (possibly empty) intersection tree ($1\leq i\leq n$), then
\small
$$
\infer=[(\delta)]{\Gamma \der \M':\sigma }{\infer[(\wedge_n)] {\bigwedge_{i=1}^n \Gamma_i \der \M: \stra{\sigma}{n}}
{\Sigma_i\dem\Gamma_{i} \der \M: \sigma_{i} \quad (1 \leq i \leq n)}}
$$
\normalsize
is an intersection tree, with conclusion $\Gamma\der \M':\sigma$, where $\M'$ is an instance of $\M$, $\Gamma$ is a contraction of $\bigwedge_{i=1}^n \Gamma_i $, and its leaves are the leaves of all the $\Sigma_i$.
\end{itemize}
\end{definition}

Since the $(\wedge_n)$ rule is the only rule building an intersection type on the right of the turnstile symbol, it is possible to state the following, which is a key property for proving the normalization bound.

\begin{property}[Subject with intersection type]
\label{prop:stra}
Let $\Pi \dem \Gamma \der \M: \stra{\sigma}{m}$ with $m > 1$. Then $\Pi$ ends with a non empty intersection tree. 
\end{property}

\begin{proof}
By induction on the shape of $\Pi$.
If the last applied rule is $(\wedge_n)$, then the statement is trivially true and $\delta$ is the empty sequence.
Otherwise, the derivation needs to contain at least one application of rule $(\wedge_n)$, with subject $\M'$, such that
$\M$ is an instance of $\M'$. Then this application can be followed only by $\delta$ of rules, which can contain only applications of rule $(w)$ or rule $(m)$.
\end{proof}

The substitution property holds for terms having disjoint free variables sets.

\begin{lemma}[Substitution]
\label{lem:subs}
Let $\Pi \dem \Gamma, \x : \sigma \der \M : \tau$, $\Sigma \dem \Delta \der \N : \sigma$, $\Gamma \# \Delta$
and $\x \not \in dom(\Delta)$.

Then there exists $S(\Sigma, \Pi)$ such that $S(\Sigma, \Pi) \dem \Gamma, \Delta \der \sub{\M}{\N}{\x} : \tau$.
\end{lemma}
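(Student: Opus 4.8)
The plan is to build $S(\Sigma, \Pi)$ by recursion on $\Pi$, quantifying over all $\Sigma$. Since the multiplexor rule will make me recurse on derivations that are \emph{not} subderivations of $\Pi$, I take as induction measure the lexicographic pair $(l(\sigma), \Pi)$, where $l(\sigma)$ is the number of elements of the substituted type and $\Pi$ is compared by structural size. As a preliminary I would record a \emph{generalized weakening} statement: from $\Gamma \der \M : \tau$ one gets $\Gamma, \Delta \der \M : \tau$ for every $\Delta$ with $\Gamma \# \Delta$ over fresh variables, proved by induction on the number of elements $l$ of the added type, using $(w)$ on linear components and rebuilding each intersection component with $(m)$ (the contracted variables being fresh, the subject is untouched). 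An intersection assumption $\x : \sigma$ can only be created by $(m)$ or by the context intersection of $(\wedge_n)$, and both force $\x \in \FV{\M}$; hence whenever $\x \notin \FV{\M}$ the type $\sigma$ is linear, $\sub{\M}{\N}{\x} \equiv \M$, and it suffices to carry the trivial substitution up to the $(w)$ that introduced $\x$, there dropping $\x : \sigma$ and weakening in $\Delta$.

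From now on $\x \in \FV{\M}$ and I split on the last rule of $\Pi$. Rule $(Ax)$ forces $\Gamma$ empty and $\sigma \equiv \A$, so $\sub{\M}{\N}{\x} \equiv \N$ and $S(\Sigma,\Pi) \equiv \Sigma$. For $(w)$, $(\lin I)$ and $(\lin E)$ the type $\sigma$ is unchanged while $\Pi$ shrinks, so the induction hypothesis applies to the premises and the same rule is reapplied; the disjointness side conditions follow from $\Gamma \# \Delta$, together with an $\alpha$-renaming of the abstracted variable in $(\lin I)$ away from $\N$ and $\Delta$. The only point to observe is that in $(\lin E)$ the two premise contexts are disjoint, so $\x$ sits in exactly one of them and the substitution is confined to that premise while the other is transported unchanged.

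The substantial cases are those where $\x$ carries the intersection type $\sigma \equiv \stra{\sigma}{n}$, namely $(\wedge_n)$ with $\x$ shared by all $n$ premises and $(m)$ with $\x$ the contracted variable. In both, Property~\ref{prop:stra} applies to $\Sigma$ (its type being an intersection): $\Sigma$ ends with a non-empty intersection tree, so there is an $(\wedge_n)$ with premises $\Sigma_i \dem \Delta_i \der \U : \sigma_i$ sharing a common subject $\U$, below which a block $\delta$ of $(w)$ and $(m)$ rules contracts $\bigwedge_{i=1}^n \Delta_i$ to $\Delta$ and rewrites $\U$ into the instance $\N$. Renaming the variables internal to $\Sigma$, I may assume every $\Delta_i$ disjoint from $\Gamma$. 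In the $(\wedge_n)$ case the premises of $\Pi$ are $\Gamma_i', \x : \sigma_i \der \M : \tau_i$ with $\tau \equiv \stra{\tau}{n}$; applying the induction hypothesis at the strictly smaller types $\sigma_i$ (so $l(\sigma_i) < l(\sigma)$), substituting the common $\U$, produces derivations of $\Gamma_i', \Delta_i \der \sub{\M}{\U}{\x} : \tau_i$ with a common subject. One application of $(\wedge_n)$ gives $\Gamma, \bigwedge_{i=1}^n \Delta_i \der \sub{\M}{\U}{\x} : \stra{\tau}{n}$ (here $\Gamma = \bigwedge_{i=1}^n \Gamma_i'$ and the $\Delta_i$, being disjoint from $\Gamma$, contribute separately), and then \emph{replaying the very block} $\delta$ contracts the context to $\Gamma, \Delta$ and rewrites each copy of $\U$ into $\N$ — uniformly, since the contracted variables occur only inside the substituted copies — yielding $\Gamma, \Delta \der \sub{\M}{\N}{\x} : \stra{\tau}{n}$.

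Rule $(m)$ is the main obstacle. Here $\M \equiv \M'[\x/\x_1, \ldots, \x/\x_n]$ for a premise subject $\M'$ with $\x_i : \sigma_i$, and $\sub{\M}{\N}{\x} \equiv \M'[\N/\x_1, \ldots, \N/\x_n]$: the single term $\N$ must be planted at the $n$ disjoint groups of positions freed by the $\x_i$. Unlike $(\wedge_n)$, these are genuinely separate substitutions, and linearity forbids reusing the $\Delta_i$ with overlapping domains; so I would take fresh, pairwise disjoint renamings $\hat{\Sigma}_i \dem \hat{\Delta}_i \der \hat{\U}_i : \sigma_i$ of the components, substitute each $\hat{\U}_i$ into $\x_i$ by the induction hypothesis at the smaller type $\sigma_i$, and reach $\Gamma, \hat{\Delta}_1, \ldots, \hat{\Delta}_n \der \M'[\hat{\U}_1/\x_1, \ldots, \hat{\U}_n/\x_n] : \tau$. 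The closing move, where the bookkeeping is heaviest, is to apply a sequence of $(m)$ and $(w)$ rules — the analogue of $\delta$ lifted to the $n$ copies — that re-identifies the duplicated context variables of $\hat{\Delta}_1, \ldots, \hat{\Delta}_n$, collapses them to $\Delta$, and simultaneously rewrites every $\hat{\U}_i$ into $\N$, so that the subject becomes exactly $\sub{\M}{\N}{\x}$ and the context exactly $\Gamma, \Delta$. Verifying that such a contraction sequence exists and delivers precisely this conclusion is the crux of the argument; granting it, the lexicographic measure secures termination, because every recursive call either shrinks $\Pi$ at an unchanged $\sigma$ or is performed at a proper component $\sigma_i$ with $l(\sigma_i) < l(\sigma)$.
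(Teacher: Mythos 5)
Your overall strategy coincides with the paper's proof in every essential respect: induction on $\Pi$ with $\Sigma$ universally quantified, Property~\ref{prop:stra} used to decompose $\Sigma$ into components $\Sigma_i \dem \Delta_i \der \N' : \sigma_i$ in the $(\wedge_n)$ and $(m)$ cases, fresh renamings of the $\Delta_i$ in the $(m)$ case, and a closing block of $(m)$/$(w)$ rules that re-identifies the duplicated context variables and turns the renamed copies of $\N'$ back into $\N$. Your lexicographic measure $(l(\sigma), |\Pi|)$ is in fact a sharpening of the paper's argument: the paper's iterated call $S(\Sigma'_2, S(\Sigma'_1,\Pi'))$ in the $(m)$ case is not an instance of structural induction on $\Pi$ alone, and your first component, with $l(\sigma_i) < l(\sigma)$, is exactly what legitimizes it.

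There is, however, a genuine flaw in your preliminary reduction. The claim that an intersection assumption $\x : \sigma$ forces $\x \in \FV{\M}$ is false in this system. Rule $(m)$ does not require the contracted variables to occur free in the subject: the paper's own definition of rank (Definition~\ref{def:measures}) counts the number $k \leq n$ of contracted variables that are free in $\M$, explicitly allowing $k < n$ and even $k = 0$. Likewise $(\wedge_n)$ can merge assumptions on $\x$ that were introduced by $(w)$ in the separate premises: weakening $\x : \A_1$ and $\x : \A_2$ into two derivations of $\der \lam\y.\y : \B \lin \B$ and applying $(\wedge_2)$ yields $\x : \A_1 \wedge \A_2 \der \lam\y.\y : (\B \lin \B) \wedge (\B \lin \B)$, where $\x$ has an intersection type but is not free in the subject. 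Consequently the case ``$\x \notin \FV{\M}$ with $\sigma$ an intersection'' is covered by neither part of your proof: your first paragraph presumes $\sigma$ linear and a unique introducing $(w)$, and your main case split is performed under the hypothesis $\x \in \FV{\M}$. The hole is repairable --- either prove a strengthening lemma (an assumption on a variable not free in the subject can be removed, after which your generalized weakening reinstalls $\Delta$), or run the $(\wedge_n)$ and $(m)$ cases without the freeness hypothesis, which is essentially what the paper does --- but as written your proof has a missing case, and the statement invoked to exclude it is false.
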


\begin{proof} By induction on the shape of $\Pi$. The proof is trivial except for the cases of $(w)$, $(\wedge_n)$ or $(m)$.

If $(w)$ is the last applied rule introducing a variable $\y \not= \x$, then the proof follows by induction. Otherwise, let $\Pi$ be the proof
\small
$$\infer[(w)]{\Pi \dem \Gamma, \x: \A \der \M: \sigma}{\Pi' \dem \Gamma \der \M: \sigma & \x \notin \dom{\Gamma}}$$
\normalsize
and let $\Sigma \dem \Delta \der \N : \A$. 
If $\Delta$ contains only bindings of variables to linear types, then $S(\Sigma, \Pi)$ is the proof
\small
$$\infer=[(w)]{S(\Sigma, \Pi) \dem \Gamma, \Delta \der \M: \sigma}{\Pi' \dem \Gamma \der \M: \sigma}$$
\normalsize
Otherwise, let us assume, without loss of generality, $\Delta = \Delta', \y : \tau$ such that $\A_{1}, ...\ , \A_{n}$ are the elements of $\tau$, and let $\Delta'$ contain only bindings of variables to linear types. Then $S(\Sigma, \Pi)$ is the proof
\small
$$\infer=[(w)]{\infer=[(m)]{S(\Sigma, \Pi) \dem \Gamma, \Delta', \y: \tau \der \M : \sigma}{\Gamma, \Delta', \y_1: \A_1, ...\ , \y_n: \A_n \der \M: \sigma}}{\Pi' \dem \Gamma \der \M: \sigma}$$
\normalsize
where the sequence of applications of rule $(m)$ is constructiong $\tau$.

If the last applied rule is $(\wedge_n)$, with $n > 1$, then $\Pi$ is of the shape 
\small
$$\infer[(\wedge_{n})]{\Gamma, \x : \stra{\sigma}{n} \der \M: \stra{\tau}{n}}{\Pi_{1} \dem \Gamma_{1}, \x: \sigma_{1} \der \M: \tau_{1} & ... & \Pi_{n} \dem \Gamma_{n}, \x: \sigma_{n} \der \M: \tau_{n}}$$ 
\normalsize
By Property \ref{prop:stra}, $\Sigma$ is of the shape
\small
$$\infer[(\wedge_{n})]{\infer=[(\delta)]{\Sigma \dem \Delta \der \N: \stra{\sigma}{n}}{\Delta' \der \N': \stra{\sigma}{n}}}{\Sigma_{1} \dem \Delta_{1} \der \N': \sigma_{1} & ... & \Sigma_n \dem \Delta_n \der \N': \sigma_{n}}$$
\normalsize
where $\delta$ is a sequence of applications of $(w)$ and $(m)$ rules, and $\N$ is an instance of $\N'$.

By inductive hypothesis $S(\Sigma_i, \Pi_i) \dem \Gamma_i, \Delta_i \der \sub{\M}{\N'}{\x} : \tau_i$, since $\Gamma \# \Delta$ implies $\Gamma_i \# \Delta_i$ for all $i$ ,
so $S(\Sigma, \Pi)$ is given by
\small
$$\infer=[(\delta)]{\Gamma, \Delta \der \sub{\M}{\N}{\x}: \stra{\tau}{n}}{\infer[(\wedge_n)]{\Gamma, \Delta' \der \sub{\M}{\N'}{\x}: \stra{\tau}{n}}{S(\Sigma_{1}, \Pi_{1}) \dem \Gamma_{1}, \Delta_{1} \der \sub{\M}{\N'}{\x}: \tau_{1} & ... & S(\Sigma_{n}, \Pi_{n}) \dem \Gamma_{n}, \Delta_{n} \der \sub{\M}{\N'}{\x}: \tau_{n}}}$$
\normalsize
If the last applied rule is $(m)$, then $\Pi$ is of the shape 
\small
$$\infer[(m)]
     {\Gamma, \x: \stra{\sigma}{n} \der \M[\x/\x_{1}, ...\ , \x/\x_{n}]: \tau}
     {\Pi' \dem \Gamma, \x_1: \sigma_{1}, \dots, \x_n: \sigma_{n} \der \M: \tau}$$
\normalsize
Exactly as in the previous case, we can apply Property \ref{prop:stra} to $\Sigma$, thus obtaining $\Sigma_{i} \dem \Delta_{i} \der \N' : \sigma_{i}$, for $1 \leq i \leq n$.
Also, we must rename the variables in $\Delta_{i}$, so that 
we actually get proofs $\Sigma'_{i} \dem \Delta'_{i} \der \N'_{i}: \sigma_{i}$ where $\N'_{i}$ is an instance of $\N'$ and all $\dom{\Delta'_{i}}$ are disjoint from each other; this is not a trouble as we will be able to recover $\Delta'$ and $\N'$ easily by a suitable sequence $\rho$ of applications of $(m)$ rules.

By induction we can now build

$$S(\Sigma'_1, \Pi') \dem \Gamma, \Delta'_1, \x_2: \sigma_2, ... , \x_n: \sigma_n \der \sub{\M}{\N'_{1}}{\x_{1}}: \tau$$
$$S(\Sigma'_2, S(\Sigma'_1, \Pi')) \dem \Gamma, \Delta'_1, \Delta'_2, \x_3: \sigma_3, ... , \x_n: \sigma_n \der \M[\N'_{1}/\x_{1}, ...\ , \N'_{2}/\x_{2}]: \tau$$
$$\vdots$$
$$S(\Sigma'_n, S(\Sigma'_{n-1}, ... S(\Sigma'_1, \Pi') ... )) \dem \Gamma, \Delta'_{1}, ... , \Delta'_{n} \der \M[\N'_{1}/\x_{1}, ...\ , \N'_{n}/\x_{n}]: \tau$$

and by applying sequences $\rho$ and $\delta$ of rule $(m)$, we get the desired proof
\small
$$\infer=[(\delta)]{S(\Sigma, \Pi) \dem \Gamma, \Delta \der \M[\N/\x_{1}, ...\ , \N/\x_{n}]: \tau}{
\infer=[(\rho)]{\Gamma, \Delta' \der \M[\N'/\x_{1}, ...\ , \N'/\x_{n}]: \tau}
{\Gamma, \Delta'_{1}, ... , \Delta'_{n} \der \M[\N'_{1}/\x_{1}, ...\ , \N'_{n}/\x_{n}]: \tau}}$$
\normalsize
\end{proof}

The substitution property is sufficient for proving the subject reduction property, but we need to take into account that one step of $\beta$-reduction on the subject can be matched by a \textsl{set} of $n \geq 1$ parallel simplification steps in the underlying derivation, corresponding to reducing virtual copies of the same redex having different types.

\begin{property}[Subject reduction]
\label{prop:subjred}
$\Pi \dem \Gamma \der \M :\sigma$ and $\M \redbeta \M'$ implies $\Pi' \dem \Gamma \der \M': \sigma$.
\end{property}

\begin{proof}
$\M \redbeta \M'$ means $\M=C[(\lambda \x.\Q) \N]$ and $\M'=C[\Q[\N/\x]]$, for some context $C[.]$. The proof is by induction on $C[.]$. 
Let us consider just the base case in which $C[.]=[.]$, i.e., $\M=(\lambda \x.\Q)\N$. Then the most difficult case is when $\Pi$ ends by a non empty intersection tree. Note that the shape of $\M$ implies each leaf $\Pi_i$ of the intersection tree be of the shape:  
\small
$$\infer[(\lin E)]{\Gamma_i, \Delta_i \der (\lambda \x. \Q_i) \N_i: \A_i}{\infer=[(\delta_i)]{\Gamma_i \der \lambda \x. \Q_i : \sigma_i \lin \A}{\infer[(\lin I)]{\Gamma'_i \der \lambda \x. \Q'_i : \sigma_i \lin \A_i}{\Sigma'_i\dem \Gamma'_i, \x: \sigma_i \der \Q'_i : \A_i}} & \Sigma''_i \dem \Delta_i \der \N_i : \sigma_i & \Gamma_i \# \Delta_i}$$
\normalsize
where $1\leq i \leq n$, for some $n >1$, $(\lambda \x.\Q)\N$ is an instance of $(\lambda \x. \Q_i) \N_i$, and $\delta_i$ is a (possibly empty) sequence of applications of $(w)$ and $(m)$ rules.  Since all $(m)$ rules in $\delta_i$ deal with variables in $\dom{\Gamma}$, sequence $\delta_i$ can be delayed to obtain the proof
\small
$$\infer=[(\delta_i)]{\Gamma_i, \Delta_i \der  (\lambda \x. \Q_i) \N_i: \A_i}{\infer[(\lin E)]{\Gamma'_i, \Delta_i \der (\lambda \x. \Q'_i) \N_i: \A_i}{\infer[(\lin I)]{\Gamma'_i \der \lambda \x. \Q'_i : \sigma_i \lin \A_i}{\Sigma'_i \dem \Gamma'_i, \x: \sigma_i \der \Q'_i : \A} & \Sigma_i'' \dem \Delta_i \der \N_i : \sigma_i}}$$
\normalsize
By Lemma \ref{lem:subs}, there are proofs $S(\Sigma''_i,\Sigma'_i) \dem \Gamma_i, \Delta_i \der
\Q_i[\N_i/x]$, and then the result is obtained by replacing the leafs $\Pi_i$ of the intersection tree by
$S(\Sigma''_i,\Sigma'_i)$ ($1\leq i \leq n$). 


\end{proof}

Moreover the system is strongly normalizing. Formally:

\begin{property}[Strong normalization]
\label{prop:sn}
$\Pi \dem \Gamma \der \M :\sigma$ if and only if $\M$ is strongly normalizing.
\end{property}

For the right implication, the proof is obtained in the next section by observing that the measure of $\Pi$ decreases with each reduction step, and this does not depend on any particular strategy. As for the left implication, the proof can be obtained by adapting Neergaard's proof \cite{Neergaard05} to system \STI. In fact, Neergaard proved the strong normalization property for a system with rigid intersection types, i.e. intersection without commutativity, associativity nor idempotency.

\begin{example}
\small Here we will show an example of a derivation in \STI, aiming to clarify the behaviour on the subject reduction in the case of a non-empy intersection tree.
Let

$$
\infer[(\lin E)]{\Sigma_{1} \dem \z: \A \der (\lam \y. \y) \z: \A}
{\infer[(\lin I)]{\der \lam \y. \y: \A \lin \A}
{\infer[(Ax)]{\y: \A \der \y: \A}{}}
&&
\infer[(Ax)]{\z: \A \der \z: \A}{}}
\hspace{10px}
\mbox{ and }
\hspace{10px}
\infer[(\lin E)]{\Sigma_{2} \dem \z: \tvar \der (\lam \y. \y) \z: \tvar}
{\infer[(\lin I)]{\der \lam \y. \y: \tvar \lin \tvar}
{\infer[(Ax)]{\y: \tvar \der \y: \tvar}{}}
&&
\infer[(Ax)]{\z: \tvar \der \z: \tvar}{}}
$$

where $\A = \tvar \lin \tvar$.

We want to reduce the term $(\lam \x. \x \x) ((\lam \y. \y) \z)$ to normal form; the derivation is the following:

$$\infer[(\lin E)]{\z: \A \wedge \tvar \der (\lam \x. \x \x)((\lam \y. \y) \z): \tvar}
{
\infer[(\lin I)]{\der \lam \x. \x \x: (\A \wedge \tvar) \lin \tvar}
{\infer[(m)]{\x: \A \wedge \tvar \der \x \x: \tvar}
{\infer[(\lin E)]{\x_{1}: \A, \x_{2}: \tvar \der \x_{1} \x_{2}: \tvar}
{\infer[(Ax)]{\x_{1}: \A \der \x_{1}: \A}{} & \infer[(Ax)]{\x_{2}: \tvar \der \x_{2}: \tvar}{}}}}
&
\infer[(\wedge_{2})]{\Sigma \dem \z: \A \wedge \tvar \der (\lam \y. \y) \z: \A \wedge \tvar}
{
\Sigma_{1} \dem \z: \A \der \mbox{\fbox{$(\lam \y. \y) \z$}}: \A
&
\Sigma_{2} \dem \z: \tvar \der \mbox{\fbox{$(\lam \y. \y) \z$}}: \tvar
}}$$

Notice that, since $\Sigma$ ends by a non empty intersection tree, there are two ``virtual`` copies of the same redex; therefore, if we reduce the redex $(\lam \y. \y) \z$, we get the following derivation:

$$\infer[(\lin E)]{\Pi \dem \z: \A \wedge \tvar \der (\lam \x. \x \x) \z: \tvar}
{
\infer[(\lin I)]{\der \lam \x. \x \x: (\A \wedge \tvar) \lin \tvar}
{\infer[(m)]{\x: \A \wedge \tvar \der \x \x: \tvar}
{\infer[(\lin E)]{\x_{1}: \A, \x_{2}: \tvar \der \x_{1} \x_{2}: \tvar}
{\infer[(Ax)]{\x_{1}: \A \der \x_{1}: \A}{} & \infer[(Ax)]{\x_{2}: \tvar \der \x_{2}: \tvar}{}}}}
&
\infer[(\wedge_{2})]{\z: \A \wedge \tvar \der \z: \A \wedge \tvar}
{
\infer[(Ax)]{\z: \A \der \z: \A}{}
&
\infer[(Ax)]{\z: \tvar \der \z: \tvar}{}
}}$$

where both the redexes of $\Sigma_{1}$ and $\Sigma_{2}$ have been reduced.

Finally, we reduce $(\lam \x. \x \x) \z$ (easy, as $\Pi$ ends with an empty intersection tree), obtaining the proof

$$\infer[(m)]{\z: \A \wedge \tvar \der \z \z: \tvar}
{\infer[(\lin E)]{\z_{1}: \A, \z_{2}: \tvar \der \z_{1} \z_{2}: \tvar}
{\infer[(Ax)]{\z_{1}: \A \der \z_{1}: \A}{} & \infer[(Ax)]{\z_{2}: \tvar \der \z_{2}: \tvar}{}}}$$

Notice that, as explained in the proof for Lemma \ref{lem:subs}, the premises of rule $(\wedge_{2})$ need to be rewritten in the substitution so that their contexts are disjoint; the original context is then recovered by a suitable sequence of $(m)$ rules.
\end{example}
\section{Normalization bound}
In computing the normalization bound, we take inspiration from \SLL\  \cite{Lafont04} and \cite{GaboardiRonchi07csl}, but taking into account the mismatch between proof simplification and $\beta$-reduction.
So here we do not use the derivation as reduction machine, but rather as a tool for computing the number of reduction steps.

To do so, we first introduce a few necessary definitions of measures.

\begin{definition}[Measures]
\label{def:measures}
\end{definition}

\begin{enumerate}[i)]

\item The \textbf{size} $|\Pi|$ of a proof $\Pi$ is defined inductively as follows:
\begin{itemize}
\item if the last rule of $\Pi$ is the axiom rule, then $|\Pi| = 1$;
\item if the last rule of $\Pi$ is a rule with $n$ premises $\Pi_i$, then $|\Pi| = \left( \sum_{i=1}^n |\Pi_i| \right) + 1$.
\end{itemize}

\item The \textbf{size} $|\M|$ of a term $\M$ is defined inductively as follows:
$$|\x| = 1; \qquad |\lambda \x. \M| = |\M| + 1; \qquad |\M\N| = |\M| + |\N| + 1.$$

\item The \textbf{rank} of a multiplexor
\small
$$\infer[(m)]
     {\Gamma, \x: \stra{\sigma}{n} \der \M[\x_i \mapsto \x]_{i=1}^n: \tau}
     {\Gamma, \x_1: \sigma_1, ..., \x_n: \sigma_n \der \M: \tau}$$
\normalsize
is the number $k \leq n$ of variables $\x_i$ such that $\x_i \in \FV{\M}$. Let $r$ be the maximum rank of a rule $(m)$ in $\Pi$. The rank $\RK{\Pi}$ of $\Pi$ is the maximum between $1$ and $r$.

\item The \textbf{degree} of a proof $\Pi$, denoted by $\D{\Pi}$, is the maximal nesting of applications of the $(\wedge_n)$ rule in $\Pi$, i.e. the maximal number of applications of the $(\wedge_n)$ rule in a path connecting the conclusion and one axiom of $\Pi$.

\item The \textbf{weight} $\W{\Pi}{r}$ of $\Pi$ with respect to $r$ is defined inductively as follows:
\begin{itemize}
\item if $(Ax)$ is the last applied rule, then $\W{\Pi}{r} = 1$;
\item if $(\lin I)$ is the last applied rule and $\Sigma$ is the premise of the rule, then $\W{\Pi}{r} = \W{\Sigma}{r} + 1$;
\item if $(\lin E)$ is the last applied rule and $\Sigma_1, \Sigma_2$ are the premises of the rule, then $\W{\Pi}{r} = \W{\Sigma_1}{r} + \W{\Sigma_2}{r} + 1$;
\item if $(\wedge_n)$ is the last applied rule and $\Sigma_1, ..., \Sigma_n$ are the premises of the rule, then $\W{\Pi}{r} = r \cdot \max_{i=1}^n \W{\Sigma_i}{r}$;
\item if either $(w)$ or $(m)$ is the last applied rule and $\Sigma$ is the unique premise derivation, then $\W{\Pi}{r} = \W{\Sigma}{r}$.
\end{itemize}

\end{enumerate}

The previously introduced measures are related to each other as shown explicitly by the following lemma:

\begin{lemma}
Let $\Pi \dem \Gamma \der \M: \sigma$. Then:
\begin{enumerate}[i)]
\item $\RK{\Pi} \leq |\M| \leq |\Pi|$.
\item $\W{\Pi}{r} \leq r^{\D{\Pi}} \cdot \W{\Pi}{1}$.
\item $\W{\Pi}{1} = |\M|$.
\end{enumerate}
\label{lem:3m}
\end{lemma}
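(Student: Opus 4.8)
The plan is to prove all three items by structural induction on $\Pi$, treating them in the order (iii), (ii), (i), since the third item is the cleanest and fixes the intuition that the weight at parameter $1$ merely recounts the subject, which then drives the other two.

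For (iii) I would check that $\W{\Pi}{1}$ and $|\M|$ obey the same recurrence. The rules $(Ax)$, $(\lin I)$ and $(\lin E)$ reproduce the clauses of $|\cdot|$ on variables, abstractions and applications verbatim. The rules $(w)$ and $(m)$ leave both quantities unchanged: $(w)$ does not touch the subject, while $(m)$ only renames variables, and renaming never alters the size of a term. The decisive clause is $(\wedge_n)$: all its premises carry the \emph{same} subject $\M$, so by the induction hypothesis each $\W{\Sigma_i}{1}$ equals $|\M|$; since the defining factor is $r=1$, the rule yields $\W{\Pi}{1}=\max_i\W{\Sigma_i}{1}=|\M|$.

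For (ii) I would observe that $\W{\Pi}{r}$ and $\W{\Pi}{1}$ are computed by identical recurrences except at $(\wedge_n)$, where the former multiplies by $r$ and the latter by $1$. In the cases $(Ax)$, $(\lin I)$, $(\lin E)$, $(w)$, $(m)$ the degree is unchanged and the bound is inherited, using $r^{\D{\Pi}}\ge 1$ to absorb the additive constants. At a rule $(\wedge_n)$ with premises $\Sigma_i$ of degrees $d_i$ one has $\D{\Pi}=1+\max_i d_i$, and the induction hypothesis gives
\[
\W{\Pi}{r}=r\cdot\max_i\W{\Sigma_i}{r}\le r\cdot\max_i\bigl(r^{d_i}\W{\Sigma_i}{1}\bigr)\le r^{\,1+\max_i d_i}\cdot\max_i\W{\Sigma_i}{1}=r^{\D{\Pi}}\cdot\W{\Pi}{1},
\]
so each nested $(\wedge_n)$ contributes exactly one factor $r$, bounded overall by the nesting $\D{\Pi}$.

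Item (i) splits into two inequalities. The bound $|\M|\le|\Pi|$ is a routine induction: every rule increases $|\Pi|$ by at least as much as it increases the subject size, the constructive rules matching the clauses of $|\cdot|$ and the remaining rules adding $1$ to $|\Pi|$ while leaving the subject size unchanged. The inequality $\RK{\Pi}\le|\M|$ is the part I expect to be the real obstacle, since it links a purely combinatorial quantity, the rank of a multiplexor, to the subject size. The key observation is that a rule $(m)$ of rank $k$ identifies $k$ distinct variables, each occurring free in its premise, with a single variable $\x$; hence its conclusion subject contains at least $k$ occurrences of $\x$ and therefore has size at least $k$. Next I would note that the subject size is non-decreasing along any path toward the root: $(\lin I)$ and $(\lin E)$ strictly enlarge it, while $(\wedge_n)$, $(w)$ and $(m)$ preserve it. Consequently the size of the final subject $\M$ dominates the size of the conclusion subject of every $(m)$ rule occurring in $\Pi$, and thus dominates every such $k$; combined with the trivial bound $1\le|\M|$ this gives $\RK{\Pi}=\max(1,r)\le|\M|$.
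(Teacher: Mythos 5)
Your proof is correct and follows essentially the same route as the paper's: structural induction on $\Pi$, where the $(\wedge_n)$ case carries items (ii) and (iii) and the $(m)$ case carries the rank bound in (i), using exactly the paper's key observations (a multiplexor of rank $k$ forces at least $k$ occurrences of the merged variable in the subject, and the $\max$ in the weight of $(\wedge_n)$ interacts cleanly with the factor $r$). The only cosmetic differences are the order in which you treat the items and that you phrase $\RK{\Pi}\leq|\M|$ as a global monotonicity-of-subject-size argument over all $(m)$ rules rather than threading it through the induction hypothesis; the content is identical.
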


\begin{proof}
The proofs are given by induction on the shape of $\Pi$.
\begin{enumerate}[i)]
\item The most interesting case is for $\Pi$ of the shape
\small
$$\infer[(m)]{\Pi \dem \Gamma, \x: \stra{\tau}{n} \der \M[\x/\x_{1}, ...\ , \x/\x_{n}]: \sigma}{\Sigma \dem \Gamma, \x_1: \tau_{1}, ... , \x_n: \tau_{n} \der \M: \sigma}$$
\normalsize
By inductive hypothesis, $\RK{\Sigma} \leq |\M| \leq |\Sigma|$.

Let $k \leq n$ be the number of variables in $\{ \x_1, ... , \x_n\} \cap \FV{\M}$. By Definition \ref{def:measures}, $\RK{\Pi} = \max\{\RK{\Sigma}, k\}$, $k \leq |\M[\x/\x_{1}, ...\ , \x/\x_{n}]| = |\M|$ and $|\Pi| = |\Sigma|  + 1$, therefore
\begin{itemize}
\item if $\max\{\RK{\Sigma}, k)\} = \RK{\Sigma}$, then $\RK{\Pi} = \RK{\Sigma} \leq |\M[\x/\x_{1}, ...\ , \x/\x_{n}]| \leq |\Sigma| + 1$
\item if $\max\{\RK{\Sigma}, k)\} =k$, then $\RK{\Pi} = k \leq |\M[\x/\x_{1}, ...\ , \x/\x_{n}]| \leq |\Sigma| + 1$
\end{itemize}
and $\RK{\Pi} \leq |\M[\x/\x_{1}, ...\ , \x/\x_{n}]| \leq |\Pi|$. 

\item The most interesting case is for $\Pi$ of the shape
\small
$$\infer[(\wedge_n)]{\Pi \dem \bigwedge_{i=1}^n \Gamma_i \der \M: \stra{\sigma}{n}}{\Sigma_1 \dem \Gamma_1 \der \M: \sigma_{1} & ... & \Sigma_n \dem \Gamma_n \der \M: \sigma_{n}}$$
\normalsize
By inductive hypothesis, $\W{\Sigma_i}{r} \leq r^{\D{\Sigma_i}} \cdot \W{\Sigma_i}{1}$ for $1 \leq i \leq n$, and in particular $\max_{i=1}^n \W{\Sigma_i}{r} \leq r^{\max_{i=1}^n \D{\Sigma_i}} \cdot \max_{i=1}^n \W{\Sigma_i}{1}$.
Moreover, by Definition \ref{def:measures}, $\W{\Pi}{r} = r \cdot \max_{i=1}^n \W{\Sigma_i}{r}$, $\D{\Pi} = \max_{i=1}^n \D{\Sigma_i} + 1$ and $\W{\Pi}{1} = 1 \cdot \max_{i=1}^n \W{\Sigma_i}{1}$, therefore
$$r \cdot \max_{i=1}^n \W{\Sigma_i}{r} \leq r \cdot r^{\max_{i=1}^n \D{\Sigma_i}} \cdot \max_{i=1}^n \W{\Sigma_i}{1} = r^{\max_{i=1}^n \D{\Sigma_i} + 1} \cdot \max_{i=1}^n \W{\Sigma_i}{1}$$

and $\W{\Pi}{r} \leq r^{\D{\Pi}} \cdot \W{\Pi}{1}$.

\item We only show the case where $\Pi$ is of the shape
\small
$$\infer[(\wedge_n)]{\Pi \dem \bigwedge_{i=1}^n \Gamma_i \der \M: \stra{\sigma}{n}}{\Sigma_1 \dem \Gamma_1 \der \M: \sigma_{1} & ... & \Sigma_n \dem \Gamma_n \der \M: \sigma_{n}}$$
\normalsize
By inductive hypothesis $\W{\Sigma_i}{1} = |\M|$ for $1 \leq i \leq n$. Moreover, by Definition \ref{def:measures}, $\W{\Pi}{1} = 1 \cdot \max_{i=1}^n \W{\Sigma_i}{1}$, therefore $\max_{i=1}^n \W{\Sigma_i}{1} = |\M|$, and $\W{\Pi}{1} = |\M|$.

\end{enumerate}
\end{proof}

So we can give the following weighted version of Lemma \ref{lem:subs}:

\begin{lemma}[Weighted substitution] Let $\Pi \dem \Gamma, \x : \sigma \der \M : \tau$ and $\Sigma \dem \Delta \der \N : \sigma$, with $\Gamma \# \Delta$ and $\x\not\in dom(\Delta)$.
Then $S(\Sigma, \Pi) \dem \Gamma, \Delta \der \M[\N/\x] : \tau$ and $\W{S(\Sigma, \Pi)}{r} \leq \W{\Pi}{r} + \W{\Sigma}{r}$,  for every $r \geq \max \{ \RK{\Pi}, \RK{\Sigma} \}$.
\label{lem:wsubs}
\end{lemma}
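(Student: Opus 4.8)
The plan is to prove the weight inequality by induction on the shape of $\Pi$, reusing verbatim the construction of $S(\Sigma,\Pi)$ from the proof of Lemma \ref{lem:subs} and merely tracking how $\W{\cdot}{r}$ evolves along it; the existence and typing of $S(\Sigma,\Pi)$ are already granted by that lemma, so only the bound is new. The guiding observation is that, by Definition \ref{def:measures}, both $(w)$ and $(m)$ leave the weight unchanged, so every \emph{administrative} fragment inserted by the construction (the sequences $\delta$, $\rho$ of $(w)$ and $(m)$ rules, and the renamings of the $\Delta_i$) is weight-transparent and can be dropped from the computation. The base case is then immediate: if $\Pi$ is the axiom $\x:\A\der\x:\A$ then $S(\Sigma,\Pi)=\Sigma$, so $\W{S(\Sigma,\Pi)}{r}=\W{\Sigma}{r}\leq 1+\W{\Sigma}{r}=\W{\Pi}{r}+\W{\Sigma}{r}$.

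The constructive cases are routine and purely additive. For $(\lin I)$ the substitution descends into the unique premise $\Pi_0$, the top rule is preserved, and the induction hypothesis together with $\W{\Pi}{r}=\W{\Pi_0}{r}+1$ closes the case. For $(\lin E)$ the linearity condition $\Gamma\#\Delta$ forces $\x$ to occur in exactly one premise; substituting there, leaving the premises $\Pi_1,\Pi_2$ otherwise intact, gives $\W{S(\Sigma,\Pi)}{r}\leq\W{\Pi_1}{r}+\W{\Sigma}{r}+\W{\Pi_2}{r}+1=\W{\Pi}{r}+\W{\Sigma}{r}$. The $(w)$ case splits as in Lemma \ref{lem:subs}: a weakening on a variable $\neq\x$ passes through by induction weight-transparently, while a weakening on $\x$ itself means $\x\notin\FV{\M}$, so $\Sigma$ is absorbed into $(w)$/$(m)$ rules, the resulting weight equals $\W{\Pi}{r}$, and the bound holds trivially.

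The first case carrying real content is $(\wedge_n)$. Here $\Pi$ ends directly with $(\wedge_n)$, and Property \ref{prop:stra} lets me write $\Sigma$ as an intersection tree, so that $\W{\Pi}{r}=r\cdot\max_i\W{\Pi_i}{r}$ and $\W{\Sigma}{r}=r\cdot\max_i\W{\Sigma_i}{r}$, while $S(\Sigma,\Pi)$ is obtained by pairing $\Pi_i$ with $\Sigma_i$ componentwise. Applying the induction hypothesis inside the maximum and then the elementary subadditivity $\max_i(a_i+b_i)\leq\max_i a_i+\max_i b_i$ yields
$$\W{S(\Sigma,\Pi)}{r}=r\cdot\max_i\W{S(\Sigma_i,\Pi_i)}{r}\leq r\Big(\max_i\W{\Pi_i}{r}+\max_i\W{\Sigma_i}{r}\Big)=\W{\Pi}{r}+\W{\Sigma}{r},$$
which is precisely why the multiplicative factor $r$ in the $(\wedge_n)$ clause of the weight is compatible with additivity.

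The $(m)$ case is the main obstacle, and the only place where the hypothesis $r\geq\RK{\Pi}$ is used essentially. Here $\Pi$ contracts $\x_1,\dots,\x_n$ into $\x:\stra{\sigma}{n}$, $\Sigma$ is again an intersection tree with components $\Sigma_i\dem\Delta_i\der\N':\sigma_i$, and $S(\Sigma,\Pi)$ is built by substituting the renamed copies $\N'_i$ for the $\x_i$ one after another, $S(\Sigma'_n,\dots S(\Sigma'_1,\Pi')\dots)$, and then recovering the subject by weight-transparent $(m)$ rules. I would iterate the single-variable bound along this chain, which telescopes to
$$\W{S(\Sigma,\Pi)}{r}\leq\W{\Pi'}{r}+\sum_{\x_i\in\FV{\M}}\W{\Sigma_i}{r},$$
where the sum ranges only over the free copies, because a copy substituted for a weakened variable is absorbed without adding weight. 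Writing $k$ for the number of indices with $\x_i\in\FV{\M}$ — which is exactly the rank of this multiplexor — and using $\W{\Pi}{r}=\W{\Pi'}{r}$ and $\W{\Sigma}{r}=r\cdot\max_i\W{\Sigma_i}{r}$, the sum is bounded by $k\cdot\max_i\W{\Sigma_i}{r}\leq r\cdot\max_i\W{\Sigma_i}{r}=\W{\Sigma}{r}$, since $k\leq\RK{\Pi}\leq r$. The delicate points I expect to have to argue carefully are exactly this case: first, that the weakened copies genuinely contribute nothing, so that the summation ranges only over the $k$ free copies rather than over all $n$, which would instead demand the false bound $n\leq r$; and second, that the intermediate proofs $S(\Sigma'_j,\dots)$ in the iterated substitution still satisfy the hypotheses with the same $r$ — in particular that their ranks do not exceed $r$ — so that the single-variable bound may legitimately be reapplied at each stage of the telescope.
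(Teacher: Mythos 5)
Your proposal is correct and follows essentially the same route as the paper's own proof: induction on the shape of $\Pi$ reusing the construction of Lemma \ref{lem:subs}, weight-transparency of $(w)$/$(m)$, subadditivity of the maximum in the $(\wedge_n)$ case via Property \ref{prop:stra}, and in the $(m)$ case the telescoped iterated substitution over only the $k$ free copies, bounded by $k \leq \RK{\Pi} \leq r$ so that $k \cdot \max_i \W{\Sigma'_i}{r} \leq \W{\Sigma}{r}$. Even the two delicate points you flag (summing over $k$ rather than $n$ copies, and the ranks of the intermediate proofs staying below $r$) are exactly the points the paper's proof relies on, the second one implicitly.
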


\begin{proof}
By induction on the shape of $\Pi$: we will refer to the proof for Lemma \ref{lem:subs} and show that the condition on the measure holds. Again, most cases are trivial so we will only show the most meaningful ones, namely $(w)$, $(\wedge)$ and $(m)$.

If the last applied rule is $(w)$, since $\W{S(\Sigma, \Pi)}{r} = \W{\Pi'}{r} = \W{\Pi}{r}$, the inequality $\W{S(\Sigma, \Pi)}{r} \leq \W{\Sigma}{r} + \W{\Pi}{r}$ is satisfied: in fact, the sequence of rules needed to recover $\Delta$ is a sequence of $(w)$ and $(m)$ rules, which do not contribute to the weigth.

If the last applied rule is $(\wedge_n)$, with $n > 1$, then, by the proof for Lemma \ref{lem:subs},
$\W{\Pi}{r} = r \cdot \max_{i=1}^{n} \W{\Pi_{i}}{r}$
and
$\W{\Sigma}{r} = r \cdot \max_{i=1}^{n} \W{\sigma_{i}}{r}$.
By inductive hypothesis $S(\Sigma_{i}, \Pi_{i}) \dem \Gamma_i, \Delta_i \der \sub{\M}{\N'}{\x} : \tau_i$ and $\W{S(\Sigma_{i}, \Pi_{i})}{r} \leq \W{\Sigma_{i}}{r} + \W{\Pi_{i}}{r}$ for $1 \leq i \leq n$.
Since $\W{S(\Sigma, \Pi)}{r} = r \cdot \max_{i=1}^n \W{S(\Sigma_{i}, \Pi_{i})}{r} \leq r \cdot \max_{i=1}^n \W{\Sigma_{i}}{r} + r \cdot \max_{i=1}^n \W{\Pi_{i}}{r} = \W{\Sigma}{r} + \W{\Pi}{r}$, the inequality is satisfied.

Let the last applied rule be $(m)$, and let $k$ be its rank.
From the proof for Lemma \ref{lem:subs}, we can assume $\{ \x_1, ...\ , \x_k \} = \FV{\M} \cap \{ \x_1, ...\ , \x_n \}$, and moreover $\W{\Sigma}{r} = r \cdot \max_{i=1}^n \W{\Sigma_i}{r} =  r \cdot \max_{i=1}^n \W{\Sigma'_i}{r}$.

Let $r \geq \max \{ \RK{\Pi}, \RK{\Sigma} \} \geq \max \{ \RK{\Sigma_1}, ...\ , \RK{\Sigma_n}, \RK{\Pi'}, k \}$.
By induction $\W{S(\Sigma'_{1}, \Pi')}{r} \leq \W{\Sigma'_1}{r} + \W{\Pi'}{r}$;
then $\W{S(\Sigma'_{2}, S(\Sigma'_{1}, \Pi'))_2}{r} \leq \W{\Sigma'_2}{r} + \W{S(\Sigma'_{1}, \Pi')}{r}$, and so on.
By applying substitutions from $1$ through $k$ we get $\W{S(\Sigma'_{k}, ... S(\Sigma'_{1}, \Pi'))}{r} \leq \W{\Sigma'_k}{r} + \W{S(\Sigma'_{k-1}, ... S(\Sigma'_{1}, \Pi'))}{r}$.
By applying sequences of rules $\rho$ and $\delta$ to $S(\Sigma'_{k}, ... S(\Sigma'_{1}, \Pi'))$, and then a suitable sequence of $(w)$ and $(m)$ rules to recover the context $\Delta$, we get the desired proof. Notice that both $\rho$ and $\delta$ do not contribute to the weight.
By induction, $\W{S(\Sigma'_{k}, ... S(\Sigma'_{1}, \Pi'))}{r} \leq \W{\Sigma'_1}{r} + ... + \W{\Sigma'_k}{r} + \W{\Pi'}{r} \leq k \cdot \max_{i=1}^k \W{\Sigma'_i}{r} + \W{\Pi'}{r}$.
Since $\W{S(\Sigma, \Pi)}{r} = \W{S(\Sigma'_{k}, ... S(\Sigma'_{1}, \Pi'))}{r}$ and $\W{\Pi}{r} = \W{\Pi'}{r}$,
$$\W{S(\Sigma, \Pi)}{r} = \W{S(\Sigma'_{k}, ... S(\Sigma'_{1}, \Pi'))}{r} \leq k \cdot \max_{i=1}^k \W{\Sigma'_i}{r} + \W{\Pi'}{r} \leq
\W{\Sigma}{r} + \W{\Pi}{r}$$
and the inequality is satisfied.

\end{proof}


Using the previous property, we can prove that the weight of a proof decreases while reducing the subject.
\begin{lemma}
$\Pi \dem \Gamma \der \M: \sigma$ and $\M \redbeta \M'$ imply there is a derivation $\Pi' \dem \Gamma \der \M' : \sigma$, such that for every $r \geq \RK{\Pi}$, $\W{\Pi'}{r} < \W{\Pi}{r}$.
\label{lem:wnorm}
\end{lemma}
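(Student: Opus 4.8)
The plan is to reuse the construction of $\Pi'$ from the proof of subject reduction (Property~\ref{prop:subjred}) and merely to check that the weight strictly drops. So I proceed by induction on the context $C[.]$ with $\M = C[(\lam\x.\Q)\N]$ and $\M' = C[\sub{\Q}{\N}{\x}]$, and I fix once and for all an arbitrary $r \geq \RK{\Pi}$. Since the rank of every subderivation of $\Pi$ is bounded by $\RK{\Pi} \leq r$, Lemma~\ref{lem:wsubs} is applicable, with this same $r$, at each point where the reduction performs a substitution; this is exactly why the hypothesis $r \geq \RK{\Pi}$ is needed.

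For the base case $C[.] = [.]$, as organized in the proof of Property~\ref{prop:subjred}, $\Pi$ ends with a (possibly empty) intersection tree whose leaves $\Pi_i$ each type a copy of the redex via an $(\lin E)$ on top of an $(\lin I)$, with body derivation $\Sigma'_i$ (discharging $\x$) and argument derivation $\Sigma''_i$; the empty tree is the degenerate single-leaf case. Following Property~\ref{prop:subjred}, $\Pi'$ is obtained by replacing each $\Pi_i$ with $S(\Sigma''_i,\Sigma'_i)$, leaving the $(\wedge_n)$, $(w)$ and $(m)$ rules of the tree unchanged. A direct reading of the weight clauses gives $\W{\Pi_i}{r} = \W{\Sigma'_i}{r} + \W{\Sigma''_i}{r} + 2$, where the two extra units are contributed by the $(\lin I)$ and $(\lin E)$ rules erased by the reduction, whereas Lemma~\ref{lem:wsubs} yields $\W{S(\Sigma''_i,\Sigma'_i)}{r} \leq \W{\Sigma'_i}{r} + \W{\Sigma''_i}{r}$. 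Hence every leaf strictly decreases, by at least $2$.

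It remains to propagate this leaf-wise decrease to the root of the intersection tree, and this is the \emph{main obstacle}: because the weight of a $(\wedge_n)$ node is $r \cdot \max_i \W{\Sigma_i}{r}$, a decrease confined to a single premise would leave the maximum, and hence the weight, unchanged. The point is that a single $\beta$-step on $\M$ is mirrored by reducing the redex in \emph{all} of its virtual copies simultaneously, so that every leaf of the tree, and therefore every premise of every $(\wedge_n)$ rule, strictly decreases at once. Formally I argue, by induction on the intersection tree, that replacing all its leaves by derivations of strictly smaller weight strictly decreases the weight of its conclusion: the $(w)$ and $(m)$ rules merely transmit the decrease, while at a $(\wedge_n)$ node the induction hypothesis makes every premise weight strictly smaller, so the maximum of these finitely many quantities is strictly smaller, and multiplying by $r \geq 1$ preserves the strict inequality. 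This gives $\W{\Pi'}{r} < \W{\Pi}{r}$ in the base case.

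For the inductive cases, where the redex lies properly inside $C[.]$ (namely $C = \lam\y.C'$, $C = C'\R$ or $C = \R C'$), I peel off the outermost constructive rule and intersection tree as in Property~\ref{prop:subjred} and apply the induction hypothesis to the subderivation(s) typing the reduced subterm. Once more the reduced subterm may occur in several virtual copies inside an intersection tree, all reduced together, so the induction hypothesis delivers a strict decrease in each copy; the strict inequality then propagates through the weight clauses for $(\lin I)$, $(\lin E)$, $(w)$, $(m)$ and $(\wedge_n)$ exactly as above, again using that a strict decrease of every premise of a $(\wedge_n)$ rule strictly decreases its $\max$-based weight. I expect these remaining cases to be routine once the base case and the parallel-copies observation are in place.
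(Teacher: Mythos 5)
Your proof is correct and follows essentially the same route as the paper's: construct $\Pi'$ by replacing each leaf $\Pi_i$ of the intersection tree with $S(\Sigma''_i,\Sigma'_i)$ and invoke Lemma~\ref{lem:wsubs} (with the single fixed $r \geq \RK{\Pi}$, which dominates the rank of every subderivation). You merely make explicit two points the paper leaves implicit --- that the strict leaf-wise decrease propagates through the $\max$-based weight of $(\wedge_n)$ nodes exactly because all virtual copies of the redex are reduced simultaneously, and the precise accounting $\W{\Pi_i}{r} = \W{\Sigma'_i}{r}+\W{\Sigma''_i}{r}+2$ (the paper writes $+1$, counting only one of the two erased constructive rules) --- both of which tighten rather than change the argument.
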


\begin{proof} As in the proof of Property \ref{prop:subjred}, we consider just the base case, when $\M=(\lambda \x.\Q)\N$. Then the most difficult case is when $\Pi$ ends by a non empty intersection tree. We will use the same terminology as in Property
\ref{prop:subjred}. Remember that $\Pi'$ is obtained from $\Pi$ by replacing every subproof $\Pi_i$:
\small
$$\infer[(\lin E)]{\Gamma_i, \Delta_i \der (\lambda \x. \Q_i) \N_i: \A_i}{\infer=[(\delta_i)]{\Gamma_i \der \lambda \x. \Q_i : \sigma_i \lin \A}{\infer[(\lin I)]{\Gamma'_i \der \lambda \x. \Q'_i : \sigma_i \lin \A_i}{\Sigma'_i\dem \Gamma'_i, \x: \sigma_i \der \Q'_i : \A_i}} & \Sigma''_i \dem \Delta_i \der \N_i : \sigma_i}$$
\normalsize
by $S(\Sigma''_i, \Sigma'_i)$, and leaving the intersection tree connecting all these subproofs unchanged.

By Lemma \ref{lem:wsubs}, for every $r \geq \RK{\Pi}$, $\W{S(\Sigma''_i, \Sigma'_i)}{r} \leq \W{\Sigma''_i}{r}+\W{\Sigma'_i}{r}$.

Since $\W{\Pi_i}{r}= \W{\Sigma''_i}{r}+\W{\Sigma'_i}{r} +1$, the proof is given.

%
%
%
%
%
%

\end{proof}

We can now prove that both the number of normalization steps and the size of the normal form are bounded by a function of the size of the term.

\begin{theorem}[Measure of reduction] Let $\Pi \dem \Gamma \der \M: \sigma$, and let $\M$ $\beta$-reduce to $\M'$ in $n$ steps. Then:
\begin{enumerate}[i)]
\item $n < |\M|^{\D{\Pi} + 1};$
\item $|\M'| < |\M|^{\D{\Pi} + 1}.$
\end{enumerate}
\label{th:redm}
\end{theorem}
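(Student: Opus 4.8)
The plan is to exploit that one $\beta$-step strictly lowers the weight $\W{\cdot}{r}$ (Lemma~\ref{lem:wnorm}) and to combine this with the three estimates of Lemma~\ref{lem:3m}, reading both bounds off a single decreasing chain of positive integers. Write $d:=\D{\Pi}$ and fix $r:=\RK{\Pi}$. Given $\M=\M_0\redbeta\M_1\redbeta\cdots\redbeta\M_n=\M'$, set $\Pi_0:=\Pi$ and apply Lemma~\ref{lem:wnorm} to each step, obtaining $\Pi_i\dem\Gamma\der\M_i:\sigma$ with $\W{\Pi_{i+1}}{r}<\W{\Pi_i}{r}$ --- this is legitimate as long as $r\ge\RK{\Pi_i}$ at each step (see below). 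Since a trivial induction on the weight clauses gives $\W{\cdot}{r}\ge 1$ for $r\ge 1$, the strict inequalities $\W{\Pi_0}{r}>\W{\Pi_1}{r}>\cdots>\W{\Pi_n}{r}\ge 1$ form $n{+}1$ distinct positive integers, forcing $n<\W{\Pi_0}{r}$. I then bound the top of the chain: by parts ii) and iii) of Lemma~\ref{lem:3m}, $\W{\Pi_0}{r}\le r^{d}\,\W{\Pi_0}{1}=r^{d}\,|\M|$, and $r=\RK{\Pi}\le|\M|$ by part i), hence $\W{\Pi_0}{r}\le|\M|^{d+1}$, which proves i). For ii), weight is monotone in $r$ (again by induction: only the $(\wedge_n)$ clause depends on $r$, and monotonically), so $|\M'|=\W{\Pi_n}{1}\le\W{\Pi_n}{r}<\W{\Pi_0}{r}\le|\M|^{d+1}$, using part iii) for the equality and the strict drop over at least one reduction step.

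The step taken on faith above --- the hypothesis $r\ge\RK{\Pi_i}$ needed to invoke Lemma~\ref{lem:wnorm} repeatedly with the \emph{single} value $r=\RK{\Pi_0}$ --- is where I expect the real work to lie: one must show that the rank never grows along the reduction, i.e. $\RK{\Pi_{i+1}}\le\RK{\Pi_i}$. I would prove this by inspecting the construction of $\Pi'$ in Property~\ref{prop:subjred}: $\Pi'$ replaces each $(\lin E)$-leaf of the intersection tree by a derivation $S(\Sigma''_i,\Sigma'_i)$, leaving the tree itself unchanged, so the only fresh multiplexors are those produced by the substitution operator $S$ of Lemma~\ref{lem:subs}. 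Tracing its $(w)$, $(\wedge_n)$ and $(m)$ cases, each $(m)$ rule that $S$ introduces (the recovery sequences $\rho$ and $\delta$) merely re-contracts a context that had been duplicated according to the arity of an intersection \emph{already} eliminated in $\Pi$; its rank is therefore bounded by the arity of a $(\wedge_n)$ rule, equivalently by the rank of an $(m)$ rule, already present in $\Pi$. Consequently no multiplexor of rank exceeding $\RK{\Pi}$ is created, giving $\RK{\Pi'}\le\RK{\Pi}$ and, by iteration, $\RK{\Pi_i}\le\RK{\Pi_0}=r$ throughout the reduction.

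Finally, since the resulting estimate $|\M|^{\D{\Pi}+1}$ involves $\Pi$ only through its degree $d=\D{\Pi}$, and not through the intermediate ranks or the specific derivations $\Pi_i$, both inequalities hold uniformly along the chosen reduction; choosing a derivation of minimal degree then makes the bound independent of $\Pi$, as announced in the introduction.
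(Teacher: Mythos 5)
Your proof is correct and follows essentially the same route as the paper's: iterate Lemma~\ref{lem:wnorm} at the single fixed value $r=\RK{\Pi}$, invoke non-increase of rank along reduction to keep that $r$ admissible at every step, and convert the head of the decreasing weight chain into $|\M|^{\D{\Pi}+1}$ via the three parts of Lemma~\ref{lem:3m}. In fact you make explicit two points the paper leaves implicit --- the justification that rank never grows under subject reduction (the paper merely asserts this) and the monotonicity of $\W{\cdot}{r}$ in $r$ used for part ii) --- so nothing further is needed.
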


\begin{proof} 
Let $\M \redbeta \M_1 \redbeta...\redbeta \M_n =\M'$. Then, by repeatedly applying 
Lemma \ref{lem:wnorm}, there is $\Pi_i \dem \M_i: \sigma$ such that, for all $r \geq \RK{\Pi}$,
$\W{\Pi_{i+1}}{r}<\W{\Pi_i}{r}$, for all $1\leq i\leq n-1$. Since the rank of a proof never increases when reducing the subject, if $r = \RK{\Pi}$, then $\W{\Pi_{i+1}}{r}<\W{\Pi_i}{r}$. Then the proof of the first point follows.

%
%
%
%


By Lemma \ref{lem:3m}, for all $i$, $|\M_i| = \W{\Pi_i}{1}$. Then the proof follows again from Lemma \ref{lem:wnorm}.


\end{proof}

So the exponent of the function is, in general, dependent on the term; for this reason, the bound on the normalization procedure can easily become exponential. Nevertheless, the proof given above is independend on a given reduction strategy.

\begin{remark}
One of the referees of this paper asked why we chosed the intersection as $n$-ary instead than binary connective, since by the lack of associativity the typability power of the system is the same in both cases, and binary intersection is more "standard". The answer is simple. We are interested not only in typability, but in using derivations for measuring the complexity of the reduction. Consider the term $M= (\lambda x y. y \underbrace{xx...x}_n)(II)$, where $I=\lambda x.x$. $|M|= 2n+6$. The minimal depth of a derivation in $\STI$ typing $M$ has depth $1$, and rank $n$, so the resulting bound for the number of $\beta$-reduction steps is $(2n +6)$, while the effective number of reductions is $2n +1$. In case of binary intersection, so modifying \STI\ in order to have only a multiplexor of rank $2$, the minimal derivation has depth $n-1$, and the resulting bound is $(2n+6)^n$, so becomes exponential.
\end{remark}
\medskip


\bibliographystyle{eptcs}
\small
\bibliography{main}

\end{document}